\documentclass[twoside]{article}

\usepackage[sc]{mathpazo} 
\usepackage[T1]{fontenc} 
\linespread{1.05} 
\usepackage{microtype} 

\usepackage[hmarginratio=1:1,top=32mm,columnsep=20pt]{geometry} 
\usepackage{multicol} 
\usepackage[hang, small,labelfont=bf,up,textfont=it,up]{caption} 
\usepackage{booktabs} 
\usepackage{float} 
\usepackage{hyperref} 

\usepackage{lettrine} 
\usepackage{paralist} 

\usepackage{mathrsfs}
\usepackage{amsmath}
\usepackage{amsfonts}
\usepackage{mathtools}
\usepackage{amsthm}

\usepackage{abstract} 

\usepackage{titlesec} 
\renewcommand\thesection{\Roman{section}} 
\renewcommand\thesubsection{\Roman{subsection}} 
\titleformat{\section}[block]{\large\scshape\centering}{\thesection.}{1em}{} 
\titleformat{\subsection}[block]{\large}{\thesubsection.}{1em}{} 

\usepackage{fancyhdr} 
\pagestyle{fancy} 
\fancyhead{} 
\fancyfoot{} 
\fancyhead[R]{\textit{S. Palasek}} 
\fancyfoot[RO,LE]{\thepage} 


\title{\vspace{-15mm}\fontsize{20pt}{10pt}\selectfont\textbf{Arbitrage-free exchange rate ensembles\\over a general trade network}} 

\author{
\textsc{Stan Palasek}\\[2mm] 
\normalsize Princeton University \\ 
\normalsize \href{mailto:spalasek@princeton.edu}{spalasek@princeton.edu} 
}
\date{June 2014}


\newtheorem{theorem}{Theorem}
\newtheorem{definition}[theorem]{Definition}
\newtheorem{lemma}[theorem]{Lemma}
\newtheorem{corollary}[theorem]{Corollary}

\begin{document}

\maketitle 

\thispagestyle{fancy} 


\begin{abstract}

It is assumed that under suitable economic and information-theoretic conditions, market exchange rates are free from arbitrage. Commodity markets in which trades occur over a complete graph are shown to be trivial. We therefore examine the vector space of no-arbitrage exchange rate ensembles over an arbitrary connected undirected graph. Consideration is given for the minimal information for determination of an exchange rate ensemble. We conclude with a topical discussion of exchanges in which our analyses may be relevant, including the emergent but highly-regulated (and therefore not a complete graph) market for digital currencies.

\end{abstract}

\begin{multicols}{2}

\section{Introduction}

Consider a set of goods with a fixed exchange rate defined for each pair traded among rational investors. Provided that there is complete information and the goods have objective worth (as might, for instance, a currency), it is necessary that the rates be such that no market participant can make a strict profit by executing a trade which ends in the same denomination with which it began. Otherwise, because the quantity of each good is conserved and the investors by assumption have identical preferences, executing the trade would leave the others taking an effective loss. We will refer to this assumption as the ``no-arbitrage'' condition.~\cite[p.~322]{harrison} Although history shows that it \textit{is} at times possible to profit from differential cross and direct exchange rates due to, for instance, asymmetric information,~\cite{mavrides, mwangi} we will for now neglect these uncommon complications.

Ellerman notes that the arbitrage-free property of such an exchange is equivalent both to the system being path-independent and to the exchange rates taking on their trivial cross-rate values.~\cite{ellerman} These truths will become evident given the formulation presented here, though we will proceed in the manner of Mirowski and consider only undirected networks in which a reciprocal exists for each potential trade.~\cite{mirowski07} Under the presumption that all relevant economies are free from arbitrage, we will further examine the structure of the space of plausible exchange ensembles from a graph-theoretic perspective. Of particular interest are markets without a universal currency (ie.\ graphs without a vertex connected to all others). In such an economy it is meaningless to speak of ``price'' since no denomination is universal (although we will prove a theorem that provides an equivalent alternative). The resulting ambiguity over the ``value'' of a commodity is of interest to postmodern philosophers who have themselves studied this system.~\cite{mirowski91}

\section{The space of arbitrage-free ensembles}

\begin{definition}\label{exchange}
Let $G$ be a connected $n$th-order graph with vertices $g_1,g_2,\ldots,g_n$ which represents the exchanges between the $n$ goods that may occur. Define an exchange matrix associated with $G$ as any $n\times n$ matrix $\tilde{E}$ of positive reals where $\tilde{E}_{i,j}$ is the exchange rate of $g_j$ to $g_i$ or equivalently the price of $g_i$ in units of $g_j$ with the additional property that if $(i,j)\notin G$, then $E_{i,j}=1$. Furthermore, for an exchange matrix $\tilde E$, define its additive exchange matrix $E$ to be the $n\times n$ matrix of reals with $E_{i,j}=\ln\tilde{E}_{i,j}$.
\end{definition}

\begin{definition}\label{noarbitrage}
Define a no-arbitrage matrix over a connected graph $G$ as an exchange matrix $\tilde{E}$ such that for any closed walk upon $G$ over the vertices $g_{n_1},g_{n_2},\ldots,g_{n_t},g_{n_1}$, we have
\begin{align}\label{mult}
\tilde{E}_{n_t,n_1}\cdot\prod_{i=1}^{t-1}\tilde{E}_{n_i,n_{i+1}}=1
\end{align}
or equivalently for the additive exchange matrix,
\begin{align}\label{add}
E_{n_t,n_1}+\sum_{i=1}^{t-1}E_{n_i,n_{i+1}}=0.
\end{align}
\end{definition}
Note that Definition \ref{noarbitrage} considers \eqref{mult} and \eqref{add} over \textit{all} closed walks on $G$, even those with repeated vertices (not counting the repetition of $g_{i_1}$). However, because a closed walk with repeated vertices can trivially be fragmented into a sequence of closed walks without repeated vertices and \eqref{mult} and \eqref{add} apply associatively, it is equivalent to consider the equations over non-repeating closed walks (``cycles'').

\begin{definition}
Let $\mathscr{M}(G)$ be the set of all additive no-arbitrage matrices over the connected graph $G$.
\end{definition}

\begin{lemma}\label{subspace}
$\mathscr{M}(G)$ is a subspace of the vector space of $n\times n$ matrices over standard matrix operations.
\end{lemma}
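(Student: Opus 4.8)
The plan is to verify the two subspace axioms directly, working entirely with the additive exchange matrices where the no-arbitrage condition \eqref{add} is linear. The key observation is that a matrix $E$ belongs to $\mathscr{M}(G)$ precisely when, for every cycle $g_{n_1},g_{n_2},\ldots,g_{n_t},g_{n_1}$ on $G$, the quantity $E_{n_t,n_1}+\sum_{i=1}^{t-1}E_{n_i,n_{i+1}}$ vanishes. Since each such constraint is a finite linear combination of the entries of $E$ set equal to $0$, the set $\mathscr{M}(G)$ is exactly the common solution set of a family of homogeneous linear equations, which is automatically a linear subspace. I would nonetheless spell this out via the standard two-step check rather than merely invoke the principle.

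First I would confirm that $\mathscr{M}(G)$ is nonempty and contains the zero matrix: taking $E$ to be the all-zeros matrix, every cyclic sum in \eqref{add} is a sum of zeros and hence equals $0$, so the zero matrix is an additive no-arbitrage matrix. (One should note this corresponds, under exponentiation, to the exchange matrix with all entries $1$, the ``trivial'' market, which is manifestly arbitrage-free.) Next, for closure under addition, I would take $E, E' \in \mathscr{M}(G)$ and fix an arbitrary cycle on $G$. Evaluating the defining sum for $E + E'$ and using commutativity and associativity of real addition, I would regroup the terms as the sum of the cyclic sum for $E$ and the cyclic sum for $E'$, each of which is $0$ by hypothesis, so the total is $0$ and $E + E' \in \mathscr{M}(G)$. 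The argument for closure under scalar multiplication is identical in spirit: for $c \in \mathbb{R}$ and $E \in \mathscr{M}(G)$, the cyclic sum for $cE$ factors as $c$ times the cyclic sum for $E$, which is $c \cdot 0 = 0$.

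One subtlety worth addressing is the status of the off-network constraint from Definition~\ref{exchange}, namely $\tilde{E}_{i,j} = 1$ whenever $(i,j) \notin G$, which translates to $E_{i,j} = 0$ additively. This is itself a homogeneous linear condition (it only forces certain coordinates to zero), so it is preserved under addition and scalar multiplication and does not disturb the subspace structure; I would remark on this so that the two notions of ``matrix over $G$'' remain consistent. The remark immediately following Definition~\ref{noarbitrage}, which reduces all closed walks to cycles, lets me quantify over the finite collection of cycles without loss of generality, so no issue of infinitely many constraints or of walks with repeated vertices arises.

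I do not anticipate a genuine obstacle here, since the linearity of \eqref{add} in the entries of $E$ does all the work; the only thing requiring any care is bookkeeping. The one point I would be careful about is to perform the entire argument additively rather than multiplicatively: the space is \emph{not} a subspace under the original exchange matrices $\tilde E$, because standard matrix addition does not interact well with the multiplicative condition \eqref{mult}, and in particular $\tilde E + \tilde E'$ generically has an entry $\tilde E_{i,j} + \tilde E'_{i,j} \neq 1$ on non-edges. The passage to logarithms in Definition~\ref{exchange} is exactly what converts the multiplicative arbitrage-free condition into the linear one that makes Lemma~\ref{subspace} true, and I would make that dependence explicit in the write-up.
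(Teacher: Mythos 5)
Your proposal is correct and follows essentially the same route as the paper: both arguments rest on the linearity of the constraint \eqref{add} in the entries of $E$ together with the observation that the off-edge condition $E_{i,j}=0$ is itself homogeneous and hence preserved under linear combinations. The only cosmetic difference is that the paper checks closure in one step via $E+c\cdot E'$ while you separate the zero matrix, additivity, and scalar multiplication.
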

\begin{proof}
Choose any $E,E'\in\mathscr{M}(G)$ and $c\in\mathbb{R}$ and consider the matrix $E+c\cdot E'$. Evaluating it in the left-hand side of \eqref{add}, the terms separate due to linearity, again yielding zero. Furthermore, since both matrices are over $G$, they contain zero entries wherever an edge is not in $G$; therefore so does their linear combination. Thus all the conditions are satisfied for $E+c\cdot E'\in\mathscr{M}(G)$ so the space is closed under linear combination.
\end{proof}

\begin{lemma}\label{facts}
Let $G$ be an $n$th-order connected graph and $E$ an additive no-arbitrage matrix on it. Then for any $i,j$,
\begin{enumerate}
\item $E_{i,i}=0$
\item $E_{i,j}=-E_{j,i}$.
\end{enumerate}
\end{lemma}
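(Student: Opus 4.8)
The plan is to prove both identities directly from the no-arbitrage condition~\eqref{add} by specializing to the simplest possible closed walks on $G$. Since $G$ is connected, every vertex $g_i$ has at least one neighbor, which guarantees that the degenerate walks I intend to use actually exist on the graph.

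First I would establish part~(1). Consider the closed walk that starts at $g_i$, steps to some neighbor $g_j$, and returns immediately to $g_i$; this is a valid closed walk on the connected graph $G$. Applying~\eqref{add} to this walk of length two gives $E_{i,j}+E_{j,i}=0$, which is exactly the statement of part~(2). Thus part~(2) falls out immediately from the length-two closed walk.

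Next I would deduce part~(1). One clean route is to take the trivial closed walk that starts and ends at $g_i$ using a single neighbor: combining the relation $E_{i,j}=-E_{j,i}$ from part~(2) with the length-two equation, or alternatively observing that the walk $g_i,g_i$ along a self-reference corresponds to the diagonal term. The most careful approach, avoiding any appeal to self-loops (which need not lie in $G$), is to use part~(2) with $i=j$: setting the two indices equal in $E_{i,j}=-E_{j,i}$ yields $E_{i,i}=-E_{i,i}$, hence $2E_{i,i}=0$ and therefore $E_{i,i}=0$. This derives part~(1) purely algebraically from part~(2), so I would present part~(2) first and then obtain part~(1) as an immediate corollary.

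I do not anticipate a genuine obstacle here, as both claims reduce to the no-arbitrage condition on a single back-and-forth traversal of one edge. The only point requiring mild care is justifying that such a length-two walk is legitimately a \emph{closed walk} in the sense of Definition~\ref{noarbitrage} and that applying~\eqref{add} to it is valid despite the repeated vertex; the remark following Definition~\ref{noarbitrage} already sanctions closed walks with repeated vertices, so this is fully covered. I would therefore keep the proof to a few lines, invoking connectivity once to guarantee a neighbor exists and then reading off both identities from~\eqref{add}.
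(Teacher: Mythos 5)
Your core move---applying \eqref{add} to the degenerate two-step walk $g_i,g_j,g_i$ to get $E_{i,j}+E_{j,i}=0$, then specializing to $i=j$ for part~(1)---is exactly the paper's argument. But there is a quantifier gap: the lemma asserts part~(2) \emph{for any} $i,j$, whereas your walk argument only establishes it when $g_j$ is a neighbor of $g_i$, since the walk $g_i,g_j,g_i$ exists only if $(i,j)\in G$. Connectivity guarantees that $g_i$ has \emph{some} neighbor, but not that the particular $j$ in the statement is one. For non-adjacent pairs you must instead invoke the convention of Definition~\ref{exchange} that $\tilde{E}_{i,j}=1$, hence $E_{i,j}=\ln 1=0$, when $(i,j)\notin G$; since $G$ is undirected, $E_{j,i}=0$ as well and antisymmetry holds trivially. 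The paper's proof opens with precisely this case split, and your write-up never mentions it.

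The same omission quietly undermines your ``careful'' derivation of part~(1). Setting $i=j$ in $E_{i,j}=-E_{j,i}$ is only legitimate once part~(2) is known to hold at the pair $(i,i)$ --- but your proof of part~(2) covers only adjacent pairs, and $(i,i)$ is an edge only when $G$ has a self-loop at $g_i$, which is exactly the situation you say you are avoiding. For a vertex without a self-loop, part~(1) does not follow from your version of part~(2) at all; it follows directly from the non-edge convention $E_{i,i}=\ln 1=0$. Once you add the case $(i,j)\notin G$ to the proof of part~(2), the specialization $i=j$ goes through in both subcases and the argument is complete and identical in substance to the paper's.
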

\begin{proof}
If $(i,j)\notin G$, then by Definition \ref{exchange} $E_{i,j}=0$ and the second statement is satisfied. Otherwise if $(i,j)\in G$, then we can apply Definition \ref{noarbitrage} to the closed walk $(i,j,i)$ to obtain $E_{i,j}+E_{j,i}=0$ and the second statement is again satisfied. Letting $i=j$, the first statement follows from the second.
\end{proof}

Note that if $i$ is not reflexively connected to itself, the first statement of the lemma holds not for any economically relevant reason, but rather because we arbitrarily defined the entries of $E$ corresponding to non-existent edges of $G$ to be 0. This convention is nonetheless useful as it preserves $\mathscr{M}$'s additive and scalar-multiplicative closure which we needed in the proof of Lemma \ref{subspace}.

\begin{definition}\label{basis}
Let $G$ be a connected graph of order $n$ and $1\leq i_k,j_k\leq n$ for $k=1,2,\ldots,t$. A collection of ordered pairs $(i_1,j_1),(i_2,j_2),\ldots,(i_t,j_t)$ is a basis for $G$'s additive exchange matrix if fixing the entries $E_{i_1,j_1},E_{i_2,j_2},\ldots,E_{i_t,j_t}$ of the no-arbitrage additive exchange matrix $E$ of $G$ uniquely and minimally determines the rest of the matrix. We will refer to $t$ as the dimension.
\end{definition}

\begin{lemma}\label{complete}
Let $G$ be the complete\footnote{When we refer to a complete graph here and elsewhere, it is irrelevant whether one includes or excludes the reflexive edges connecting each vertex to itself, ie.\ whether ones or zeros constitute the main diagonal of the adjacency matrix. The main diagonal of the exchange matrix will in either case be zeros due either to no-arbitrage over the reflexive loops or the convention of setting to zero the entries corresponding to absent edges.} graph of order $n$. Then for any $k$, the entries $(k,1),(k,2),\ldots,(k,k-1),(k,k+1),\ldots,(k,n-1),(k,n)$ are a basis for the exchange matrix of $G$.
\end{lemma}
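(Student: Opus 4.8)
The plan is to use the special structure of the complete graph: every pair of vertices $g_i, g_j$ sits on a triangle through $g_k$, which lets me pin down an arbitrary entry $E_{i,j}$ in terms of the prescribed $k$th-row entries. Write $a_j := E_{k,j}$ for $j \ne k$, and note $a_k = E_{k,k} = 0$ by Lemma \ref{facts}. Applying the no-arbitrage condition \eqref{add} to the closed walk $(g_i, g_k, g_j, g_i)$ gives $E_{i,k} + E_{k,j} + E_{j,i} = 0$; substituting the antisymmetries $E_{i,k} = -E_{k,i} = -a_i$ and $E_{j,i} = -E_{i,j}$ from Lemma \ref{facts} yields
\begin{align*}
E_{i,j} = E_{k,j} - E_{k,i} = a_j - a_i.
\end{align*}
Thus the $n-1$ numbers $a_1, \dots, a_{k-1}, a_{k+1}, \dots, a_n$ determine every entry of $E$, which is the uniqueness half of Definition \ref{basis}.

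For the existence half --- and to obtain minimality --- I would show the prescribed entries can be assigned freely. Given arbitrary reals $a_j$ for $j \ne k$ (and $a_k := 0$), define a candidate matrix by $E_{i,j} := a_j - a_i$. The crucial point is that this matrix automatically lies in $\mathscr{M}(G)$: along any closed walk $g_{n_1}, g_{n_2}, \dots, g_{n_t}, g_{n_1}$ the left side of \eqref{add} telescopes,
\begin{align*}
(a_{n_1} - a_{n_t}) + \sum_{i=1}^{t-1}(a_{n_{i+1}} - a_{n_i}) = 0,
\end{align*}
and since $G$ is complete there are no absent-edge entries to reconcile. Hence $E \mapsto (a_1, \dots, a_{k-1}, a_{k+1}, \dots, a_n)$ is a linear bijection from $\mathscr{M}(G)$ onto $\mathbb{R}^{n-1}$, so in particular every assignment of the $k$th row does extend to a genuine no-arbitrage matrix.

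Finally, minimality is read off from this bijection. Fixing $t$ entries amounts to prescribing $t$ coordinate functionals on $\mathscr{M}(G)$, and pinning down a unique point of an $(n-1)$-dimensional space requires at least $n-1$ of them to be independent; since the chosen row realizes exactly $t = n-1$ and already determines $E$, no smaller collection can suffice. I expect the only fiddly part to be the sign bookkeeping around the diagonal convention $a_k = 0$ and the antisymmetry from Lemma \ref{facts}; the no-arbitrage verification itself is immediate from telescoping, and because telescoping already covers every closed walk, the cycle reduction noted after Definition \ref{noarbitrage} is not even required here.
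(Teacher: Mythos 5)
Your proof is correct and follows essentially the same route as the paper's: the triangle relation $E_{i,j}=E_{k,j}-E_{k,i}$ through $g_k$ gives uniqueness, and the telescoping of \eqref{add} for the candidate $E_{i,j}=a_j-a_i$ shows every assignment of the $k$th row extends, which yields minimality. The only cosmetic difference is that you package the minimality step as an explicit linear bijection onto $\mathbb{R}^{n-1}$, where the paper states the same conclusion informally.
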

\begin{proof}
Fix values for $E_{k,1}$ through $E_{k,n}$ excluding $E_{k,k}$. Consider $i,j$ satisfying $(i,j)\in G$. Case~1: If $i=j=k$, then $E_{i,j}$ is determined by part 1 of Lemma \ref{facts}. Case~2: Suppose $i=k\neq j$. Then $E_{i,j}=E_{k,j}$ is among the fixed values. Switching the roles of $i$ and $j$ is determined likewise up to a sign by the second part of Lemma \ref{facts}. Case~3: Suppose $i,j\neq k$. Since the graph is complete, $(j,k)$ and $(k,i)$ are in $G$. Therefore $g_i,g_j,g_k,g_i$ form a closed walk and by Definition \ref{noarbitrage}, $E_{i,j}+E_{j,k}+E_{k,i}=0$. Using the second part of Lemma \ref{facts}, this rearranges to $E_{i,j}=E_{k,j}-E_{k,i}$ which is again determined in terms of the fixed values. Thus we've shown that every entry of $E$ is determined uniquely.

Next we illustrate that a smaller collection of entries does not determine the exchange matrix uniquely. Let
\begin{align}
E_{i,j}=E_{k,j}-E_{k,i}
\end{align}
for all $1\leq i,j,k\leq n$. Then \eqref{add} becomes
\begin{align}
E_{k,n_1}-E_{k,n_t}+\sum_{i=1}^{t-1}(E_{k,n_{i+1}}-E_{k,n_i})=0
\end{align}
which telescopes to a tautology. Therefore such an exchange matrix is no-arbitrage regardless of the choice of $E_{k,1},E_{k,2},\ldots,E_{k,k-1},E_{k,k+1},\ldots,E_{k,n-1},E_{k,n}$. We conclude that no smaller set could determine the exchange matrix uniquely; thus both conditions of Definition \ref{basis} are met.
\end{proof}

In the next lemma, unlike before, we will additionally refer to the standard linear algebra definition of basis. However the result will allow us to identify the Definition \ref{basis} understanding of basis and dimension precisely with a basis for and the dimension of the space $\mathscr{M}(G)$.

\begin{definition}\label{eps}
Let $G$ be a connected graph of order $n$ and $(i_1,j_1),(i_2,j_2),\ldots,(i_t,j_t)$ a basis for it. For $k=1,2,\ldots,t$, define $\epsilon_k$ to be the matrix uniquely determined by letting $E_{i_k,j_k}=1$ and setting the rest of the basis entries to zero.
\end{definition}

\begin{lemma}\label{equiv}
Let $G$ be a connected graph of order $n$ with $(i_1,j_1),(i_2,j_2),\ldots,(i_t,j_t)$ and $\epsilon_1,\epsilon_2,\ldots,\epsilon_k$ as in Definition \ref{eps}. Then $\sum_{k=1}^ta_k\epsilon_k$ is the unique no-arbitrage exchange matrix over $G$ with $a_k$ at the entry $(i_k,j_k)$ for all $k=1,2,\ldots,t$.
\end{lemma}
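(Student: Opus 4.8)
The plan is to leverage the vector-space structure established in Lemma \ref{subspace} together with the determinacy built into Definition \ref{basis}, so that the statement reduces to three short verifications: that $\sum_{k=1}^t a_k\epsilon_k$ is no-arbitrage, that it carries the value $a_k$ at each basis entry, and that no other no-arbitrage matrix does so.

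First I would observe that each $\epsilon_k$ is itself a member of $\mathscr{M}(G)$. Indeed, Definition \ref{eps} specifies $\epsilon_k$ as the matrix uniquely determined by prescribing the basis entries, and by Definition \ref{basis} prescribing the basis entries of a no-arbitrage matrix pins down the full additive no-arbitrage matrix; so $\epsilon_k$ is by construction an element of $\mathscr{M}(G)$, not merely some matrix exhibiting the prescribed basis pattern. Since $\mathscr{M}(G)$ is a subspace (Lemma \ref{subspace}), it is closed under linear combination, and therefore $\sum_{k=1}^t a_k\epsilon_k\in\mathscr{M}(G)$; in particular it is an additive no-arbitrage matrix over $G$.

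Next I would read off the basis entries. For this I would first note that the pairs $(i_1,j_1),\ldots,(i_t,j_t)$ are pairwise distinct: if two coincided, deleting the repetition would leave a strictly smaller collection still determining $E$, contradicting the minimality clause of Definition \ref{basis}. Hence, by Definition \ref{eps}, the matrix $\epsilon_k$ has a $1$ in position $(i_k,j_k)$ and a $0$ in every other basis position $(i_{k'},j_{k'})$ with $k'\neq k$. Evaluating the linear combination at a basis position $(i_{k'},j_{k'})$ then gives $\sum_{k=1}^t a_k(\epsilon_k)_{i_{k'},j_{k'}}=a_{k'}$, so $\sum_{k=1}^t a_k\epsilon_k$ carries $a_{k'}$ at $(i_{k'},j_{k'})$ for every $k'$.

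Finally, uniqueness is immediate from Definition \ref{basis} itself: fixing the basis entries uniquely determines the entire no-arbitrage matrix, so there can be at most one no-arbitrage matrix over $G$ whose basis entries equal $a_1,\ldots,a_t$. Having exhibited $\sum_{k=1}^t a_k\epsilon_k$ as such a matrix, it must be the unique one. I do not anticipate a genuine obstacle here; the only point requiring care is the distinctness of the basis positions, needed to read off $a_{k'}$ cleanly, which I would justify via minimality as above, together with the bookkeeping confirming that each $\epsilon_k$ genuinely lands in $\mathscr{M}(G)$.
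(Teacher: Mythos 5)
Your proof is correct and follows essentially the same route as the paper's: establish $(\epsilon_k)_{i_m,j_m}=\delta_{k,m}$, read off $a_m$ from linearity, invoke Lemma \ref{subspace} for membership in $\mathscr{M}(G)$, and get uniqueness from the determinacy in Definitions \ref{basis} and \ref{eps}. Your added checks --- that the basis pairs are pairwise distinct (via minimality) and that each $\epsilon_k$ genuinely lies in $\mathscr{M}(G)$ --- are points the paper leaves implicit, and they only strengthen the argument.
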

\begin{proof}
It follows from Definition \ref{eps} that the $k$th matrix's $(i_k,j_k)$ entry is one and for $m\neq k$, its $(i_m,j_m)$ entry is zero. In mathematical terms,
\begin{align}
(\epsilon_k)_{i_m,j_m}=\delta_{k,m}.
\end{align}
Therefore,
\begin{align}
\left(\sum_{k=1}^ta_k\epsilon_k\right)_{i_m,j_m}&=\sum_{k=1}^ta_k(\epsilon_k)_{i_m,j_m}\\
&=\sum_{k=1}^ta_k\delta_{k,m}\\
&=a_m.
\end{align}
By Lemma \ref{subspace}, $\sum_{k=1}^ta_k\epsilon_k\in\mathscr{M}(G)$. By Definition \ref{eps}, it is unique.
\end{proof}

\begin{theorem}\label{realbasis}
Let $G$ be a connected graph of order $n$ with $(i_1,j_1),(i_2,j_2),\ldots,(i_t,j_t)$ and $\epsilon_1,\epsilon_2,\ldots,\epsilon_k$ as in Definition \ref{eps}. Then the collection of matrices $\epsilon_1,\epsilon_2,\ldots,\epsilon_t$ form a basis of $\mathscr{M}(G)$.
\end{theorem}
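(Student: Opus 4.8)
The plan is to show that $\epsilon_1, \ldots, \epsilon_t$ satisfy both defining properties of a basis in the linear-algebra sense: they span $\mathscr{M}(G)$ and they are linearly independent. The key observation, already packaged in Lemma \ref{equiv}, is that any linear combination $\sum_{k=1}^t a_k \epsilon_k$ lies in $\mathscr{M}(G)$ and has the prescribed value $a_m$ at each basis entry $(i_m, j_m)$. So most of the work is to translate the Definition \ref{basis} notion of basis (that fixing the entries $E_{i_k,j_k}$ uniquely and minimally determines $E$) into the two linear-algebra conditions.

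First I would prove spanning. Take an arbitrary $E \in \mathscr{M}(G)$ and read off its basis-entry values $a_k := E_{i_k, j_k}$ for $k = 1, \ldots, t$. By Lemma \ref{equiv}, the matrix $\sum_{k=1}^t a_k \epsilon_k$ is \emph{the} no-arbitrage matrix over $G$ whose $(i_m, j_m)$ entry equals $a_m$ for every $m$. But $E$ is also a no-arbitrage matrix agreeing with these same values on the basis entries, and by Definition \ref{basis} the basis entries determine the full matrix uniquely; hence $E = \sum_{k=1}^t a_k \epsilon_k$. This exhibits $E$ as a linear combination of the $\epsilon_k$, so they span $\mathscr{M}(G)$.

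Next I would prove linear independence. Suppose $\sum_{k=1}^t a_k \epsilon_k = 0$, the zero matrix. Evaluating the $(i_m, j_m)$ entry of both sides and using the computation in Lemma \ref{equiv} (namely $(\sum_k a_k \epsilon_k)_{i_m,j_m} = a_m$), we get $a_m = 0$ for every $m = 1, \ldots, t$. Thus the only vanishing linear combination is the trivial one, establishing independence. Together with spanning, this shows $\{\epsilon_1, \ldots, \epsilon_t\}$ is a basis of $\mathscr{M}(G)$ in the usual sense, and in particular $\dim \mathscr{M}(G) = t$, reconciling the two notions of dimension as the remark preceding Definition \ref{eps} promised.

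I do not anticipate a genuinely hard step here, since Lemma \ref{equiv} has already done the heavy lifting. The one point requiring care is the spanning argument: it hinges on the uniqueness half of Definition \ref{basis}, so I would make explicit that two no-arbitrage matrices agreeing on all basis entries must be equal. This is exactly what "uniquely \dots\ determines the rest of the matrix" guarantees, but it is the logical hinge on which the identification of the two basis concepts turns, so I would state it plainly rather than leave it implicit.
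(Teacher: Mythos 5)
Your proposal is correct and takes essentially the same route as the paper: both arguments rest entirely on Lemma \ref{equiv} together with the uniqueness clause of Definition \ref{basis}, with your version merely unpacking the paper's ``unique expression as a linear combination'' characterization of a basis into the separate spanning and linear-independence conditions. Your explicit note that two no-arbitrage matrices agreeing on all basis entries must coincide is a worthwhile clarification of the logical hinge the paper leaves compressed, but it is not a different proof.
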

\begin{proof}
We must show that every no-arbitrage additive exchange matrix over $G$ has a unique expression as a linear combination of the $\epsilon_k$. By Lemma \ref{equiv}, this is equivalent to the proposition that every no-arbitrage additive exchange matrix over $G$ can be expressed with a unique choice of constants to fill the entries $(i_1,j_1),(i_2,j_2),\ldots,(i_t,j_t)$. This proposition is precisely the meaning of $(i_1,j_1),(i_2,j_2),\ldots,(i_t,j_t)$ being a basis as given by Definition \ref{basis}.
\end{proof}

\begin{corollary}\label{boundaries}
Let $G$ be the complete graph and $H$ a connected tree (acyclic graph), both of order $n$. Then
\begin{align}
\dim\mathscr{M}(G)=\dim\mathscr{M}(H)=n-1.
\end{align}
\end{corollary}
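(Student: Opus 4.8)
The plan is to reduce both dimension computations to Theorem \ref{realbasis}, which identifies $\dim\mathscr{M}(\cdot)$ with the size $t$ of any basis in the sense of Definition \ref{basis}. For the complete graph the groundwork is already laid: Lemma \ref{complete} exhibits a basis consisting of the off-diagonal entries $(k,1),\ldots,(k,k-1),(k,k+1),\ldots,(k,n)$ of a single row, and there are exactly $n-1$ of these. Invoking Theorem \ref{realbasis} then gives $\dim\mathscr{M}(G)=n-1$ with no further work.

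For the tree $H$ I would first recall the standard fact that a connected acyclic graph of order $n$ has exactly $n-1$ edges. The key structural observation is that, because $H$ has no cycles, the no-arbitrage conditions become vacuous: by the remark following Definition \ref{noarbitrage} it suffices to impose \eqref{add} over cycles, of which $H$ has none. Hence $\mathscr{M}(H)$ consists of precisely those additive matrices that are antisymmetric, vanish on the diagonal, and are supported on the edge set of $H$, the only constraints being those of Lemma \ref{facts} together with the convention of Definition \ref{exchange} that non-edges carry zero.

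Next I would fix an orientation of each edge, producing $n-1$ ordered pairs $(i_1,j_1),\ldots,(i_{n-1},j_{n-1})$, and argue that these form a basis in the sense of Definition \ref{basis}. For determination: once the chosen edge entries are fixed, the reverse entries follow from part 2 of Lemma \ref{facts}, the diagonal from part 1, and all remaining entries vanish by Definition \ref{exchange}, so the whole matrix is pinned down. For minimality, mirroring the second half of the proof of Lemma \ref{complete}, I would observe that \emph{any} assignment of real values to these $n-1$ entries yields a valid element of $\mathscr{M}(H)$, since no cycle constraint exists to be violated; the values are therefore mutually independent and no smaller collection can determine the matrix uniquely. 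Theorem \ref{realbasis} then delivers $\dim\mathscr{M}(H)=n-1$, completing the equality.

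The main obstacle is the tree case, and specifically making rigorous the claim that acyclicity empties the no-arbitrage constraints. The subtlety is that Definition \ref{noarbitrage} quantifies over \emph{all} closed walks, including those that backtrack along an edge; I would lean on the reduction to cycles noted after that definition, observing that a backtracking segment contributes $E_{i,j}+E_{j,i}=0$ by antisymmetry and so drops out, after which the absence of cycles in $H$ closes the argument. Everything else is bookkeeping already performed for the complete graph.
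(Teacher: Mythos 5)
Your proposal is correct and follows essentially the same route as the paper: the complete-graph case is dispatched by Lemma \ref{complete} together with Theorem \ref{realbasis}, and the tree case rests on the observation that acyclicity reduces all closed-walk constraints to the antisymmetry relations of Lemma \ref{facts}, leaving one free parameter per edge and hence $n-1$ in total. Your treatment of the backtracking walks and the explicit orientation of the edges is somewhat more careful than the paper's terse ``necessary and sufficient to check with $k=2$'' remark, but it is the same argument.
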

\begin{proof}
The equality for $G$ follows immediately from Lemma \ref{complete} and Theorem \ref{realbasis}.

By definition of a tree, the only closed walks on $H$ upon which we need to invoke \eqref{add} are those which double back on themselves, ie.\ those of the form $(i_1,i_2,\ldots,i_{k-1},i_k,i_{k-1},\ldots,i_2,i_1)$ for some $k\in\mathbb{N}$. Linearity of \ref{add} implies that it is necessary and sufficient to check this condition with $k=2$, yielding anticommutativity as in Lemma \ref{facts} as the sole restriction. Thus the additive exchange has a degree of freedom for each edge, of which there are $n-1$ for any tree.~\cite[p.~14, Cor.~1.5.3]{diestel} Then by Theorem \ref{realbasis}, $\dim\mathscr{M}(H)=n-1$.
\end{proof}

\begin{theorem}\label{big}
If $G$ is a connected graph, then
\begin{align}
\dim\mathscr{M}(G)=|G|-1.
\end{align}
\end{theorem}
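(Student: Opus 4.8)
The plan is to handle a general connected $G$ by interpolating between the two cases of Corollary \ref{boundaries} via a spanning tree, which simultaneously carries the tree computation and absorbs the surplus edges through the cycle argument already used for the complete graph. Since $G$ is connected it contains a spanning tree $T$, and $T$ has exactly $n-1=|G|-1$ edges.~\cite[p.~14, Cor.~1.5.3]{diestel} Fixing an orientation of each tree edge yields a list of ordered pairs $(i_1,j_1),\ldots,(i_{n-1},j_{n-1})$, and I would show this list is a basis in the sense of Definition \ref{basis}. Theorem \ref{realbasis} then converts this into a linear-algebra basis $\epsilon_1,\ldots,\epsilon_{n-1}$ of $\mathscr{M}(G)$, giving $\dim\mathscr{M}(G)=n-1$ at once.

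First I would verify that the tree entries determine the whole matrix. Off-diagonal entries for non-edges and all diagonal entries are $0$ by Definition \ref{exchange} and Lemma \ref{facts}, and entries on tree edges are fixed (up to the sign supplied by Lemma \ref{facts}). For a non-tree edge $(i,j)\in G$ there is a unique path in $T$ joining $g_i$ and $g_j$; closing it with $(i,j)$ produces a cycle of $G$, and applying \eqref{add} to that cycle forces $E_{i,j}$ to equal the signed sum of the tree entries along the path. Hence every entry is pinned down by the $n-1$ fixed values.

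The step I expect to be the main obstacle is consistency: I must confirm that the values forced one cycle at a time actually assemble into an element of $\mathscr{M}(G)$, that is, that \eqref{add} holds around \emph{every} closed walk rather than only the fundamental cycles used above. This is where acyclicity of $T$ is essential. The fundamental cycles determined by the non-tree edges generate the entire cycle space of $G$, so each closed walk is a combination of them as a formal edge-sum; since the left side of \eqref{add} is linear in the entries---the same linearity behind Lemma \ref{subspace} and the tree case of Corollary \ref{boundaries}---its vanishing on the fundamental cycles propagates to all closed walks. Cleanest of all, I would exhibit the completion explicitly by a potential: with a root fixed, let $\phi_v$ be the directed sum of the tree entries along the unique tree path to $v$ and set $E_{i,j}=\phi_j-\phi_i$ on every edge of $G$. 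This matrix agrees with the fixed values, and the telescoping identity already seen in the proof of Lemma \ref{complete} shows it is no-arbitrage; it is therefore the required completion, and taking the fixed values to be those of Definition \ref{eps} produces the matrices $\epsilon_k$ and shows every assignment on the tree edges is realizable.

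It remains to dispatch minimality, the second clause of Definition \ref{basis}, and here the potential picture also settles the dimension outright. The assignment $\phi\mapsto(\phi_j-\phi_i)_{(i,j)\in G}$ is a linear surjection $\mathbb{R}^{n}\to\mathscr{M}(G)$---surjective because the path-independence guaranteed by \eqref{add} lets one recover a potential from any no-arbitrage matrix, using connectivity to reach every vertex---and its kernel is precisely the line of constant potentials, since on a connected graph $\phi_j=\phi_i$ across every edge forces $\phi$ constant. By rank-nullity $\dim\mathscr{M}(G)=n-1$, so no list of fewer than $n-1$ entries can determine a general element. With determination, realizability, and minimality established, the tree edges form a basis in the sense of Definition \ref{basis}, and Theorem \ref{realbasis} records the conclusion $\dim\mathscr{M}(G)=n-1=|G|-1$.
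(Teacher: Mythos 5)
Your proof is correct, but it takes a genuinely different route from the paper's. The paper also starts from a spanning tree, but then runs an induction on edges: it builds a chain $H_{n-1}\subseteq H_n\subseteq\cdots\subseteq H_{\|G\|}=G$ adding one edge at a time, and in each step argues (via a three-case analysis: reflexive loop, unique connecting path, multiple connecting paths) that the new edge contributes exactly one new variable and one constraint already determined by the existing basis, so the dimension stays at $n-1$ throughout. You instead prove the statement in one shot by exhibiting $\mathscr{M}(G)$ as the image of the potential map $\phi\mapsto(\phi_j-\phi_i)_{(i,j)\in G}$: the telescoping identity (already present in Lemma \ref{complete}) shows the image lies in $\mathscr{M}(G)$, path-independence from \eqref{add} plus connectivity gives surjectivity, the kernel is the constants, and rank--nullity delivers $\dim\mathscr{M}(G)=n-1$ together with minimality for free. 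Your version is shorter, disposes of the all-closed-walks consistency issue globally rather than cycle by cycle, and makes explicit the structural fact (alluded to in the introduction via Ellerman) that every no-arbitrage ensemble is a potential difference, i.e.\ takes trivial cross-rate values; the paper's induction, by contrast, makes visible exactly why each edge beyond the spanning tree imposes only a redundant constraint, which is useful intuition for the perturbation discussion later. One small caution: your first suggested consistency argument (``fundamental cycles generate the cycle space, so vanishing propagates to all closed walks'') is the one genuinely hand-wavy point, since it needs care with orientations and coefficients; but you correctly supersede it with the explicit potential construction, so nothing is missing.
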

\begin{proof}
Let $n=|G|$, the number of vertices in the graph. $G$ must contain a normal spanning tree~\cite[p.~16, Prop.~1.5.6]{diestel}; call it $H_{n-1}$ in recognition of the fact that it has $n-1$ edges. We will henceforth denote the number of edges of a graph by $\|\cdot\|$. Next, construct a sequence of graphs $H_{n-1},H_n,\ldots,H_{\|G\|-1},H_{\|G\|}$ with the following properties for $n-1\leq i<\|G\|$:
\begin{enumerate}
\item $H_i$ is a spanning subgraph of $H_{i+1}$
\item $\|H_{i+1}\|=\|H_i\|+1$
\item $H_{\|G\|}=G$
\end{enumerate}
In other words, the sequence of graphs from $H_{\|G\|}$ to $H_{n-1}$ is the transformation of $G$ into one of its spanning trees, successively removing edges at each step. Since $H_{n-1}$ is by definition a subgraph of $G$, such a sequence evidently exists. We will proceed by induction over the $H_i$ to show that each of the $\mathscr{M}(H_i)$ and in particular $\mathscr{M}(H_{\|G\|})$ has dimension $n-1$.

The base case in which $i=n-1$ is given by Corollary \ref{boundaries}.

For the inductive step, suppose that for some $i$ satisfying $n-1\leq i<\|G\|$, $\dim\mathscr{M}(H_i)=n-1$. Let the edge $(k,m)$ be the singleton element of $H_{i+1}\backslash H_i$. There are three cases. Case~1: Suppose $k=m$. The only new cycle is the reflexive loop $(k,m)=(k,k)$, the exchange matrix entry over which is trivially determined to be 0. Thus the collection of entries that form a basis of $H_i$ in the Definition \ref{basis} sense likewise form one of $H_{i+1}$. By Theorem \ref{realbasis} they have equal dimension. Case~2: Suppose there is exactly one path from $g_k$ to $g_m$ consisting of distinct vertices, call it $\mathscr{C}$. Then the addition of the edge $(k,m)$ creates four cycles on $H_{i+1}$ that did not exist on $H_i$: $(k,m,k)$, $\mathscr{C}\cup(k)$, and their respective reversals. Analogously to Lemma \ref{facts}, the no-arbitrage condition on $(k,m,k)$ implies
\begin{align}\label{anti}
E_{k,m}=-E_{m,k}.
\end{align}
Thus all the edges in $H_{i+1}$ are anticommutative, so if a cycle satisfies \eqref{add} then its reversal must as well. We therefore see that the only nontrivial new constraint introduced by the edge $(k,m)$ is the no-arbitrage condition on one direction of $\mathscr{C}\cup(k)$. Furthermore, by \eqref{anti}, there is only one new unique variable to be determined.

Consider the equation we referenced imposed by \ref{add} on $\mathscr{C}\cup(k)$. Because $H_i$ is a spanning subgraph of $H_{i+1}$ and a basis of size $n-1$ for $H_i$'s exchange matrix determines each of the variables, that basis determines all of the variables in the no-arbitrage equation except for $E_{k,m}$, which in turn is determined by the equation. We therefore see that $H_{i+1}$ has a basis of size $n-1$ in the Definition \ref{basis} sense, so by Theorem \ref{realbasis}, $\dim\mathscr{M}(H_{i+1})=n-1$. Case~3: Suppose there is more than one path from $g_k$ to $g_m$ consisting of distinct vertices. We will illustrate that additional paths impose no greater restriction than the single one we considered in the second case. Let $\mathscr{C}$ and $\mathscr{C}'$ be two non-identical paths from $g_k$ to $g_m$ on $H_i$. $H_{i+1}$ now has two cycles that we did not consider in Case~2: $\mathscr{C'}\cup(k)$ and its reverse. Again by anticommutativity, the no-arbitrage equations imposed by these paths are equivalent. Furthermore, observe that $\mathscr{C}\cup(-\mathscr{C}')$ is a cycle on $H_i$ where $-\mathscr{C}'$ is the reversal of the path $\mathscr{C}'$. Since the basis for $H_i$ provided for no-arbitrage over this cycle as it does not assume the edge $(k,m)$, its no-arbitrage equation can be assumed.
\begin{align}
\sum_{(p,q)\in\mathscr{C}\cup(-\mathscr{C}')}E_{p,q}=0
\end{align}
We may then split the equation over the two paths.
\begin{align}
\sum_{(p,q)\in\mathscr{C}}E_{p,q}=-\sum_{(p,q)\in-\mathscr{C}'}E_{p,q}
\end{align}
Thus the no-arbitrage condition is equivalent over any two paths so the third case reduces to the second. It follows by induction that $\dim\mathscr{M}(H_i)=n-1$ for $n-1\leq i\leq\|G\|$. Letting $i=\|G\|$ proves the theorem.
\end{proof}

\section{Discussion}

The cited literature's consideration of exchange rates over arbitrary networks (see, for instance,~\cite{ellerman, mirowski91}) seems to rely on the premises both that complete graphs are too trivial to be of interest and that nontrivial underlying networks may exist \textit{ex papyro}. However, the conditions formulated both implicitly and explicitly in the introduction may considerably limit the scope of the class of systems which our analyses may encompass. First, we required that all traders have the same preferences, ie.\ that they demand the same price for a given asset; otherwise we would not have a fixed exchange matrix. Second, in order to obtain a nontrivial (here, non-sparse) graph, trades must occur between a variety of pairs of assets; if there is a single commodity that is behaving as a universal currency, then every cross rate is immediately determined triangularly. The empirical existence of idiosyncratic consumer preferences, investment objectives, and information makes the first condition implausible (American reality television provides an explicit counterexample, see \cite{floro}). Furthermore, the second condition obsolesced along with the bartering system in modern economies. Hence currency markets are the only logical applications where our considerations might be relevant. Neglecting differential preferences for foreign goods, a given currency has the same value across individuals (my euro buys as many goods as your euro). 

Now we must contend only with the potential of the underlying network to be not sparse \textit{enough}. Certainly there are active exchanges between virtually every pair of conventional currencies and, as we showed in Lemma \ref{complete}, such complete no-arbitrage networks are uniquely determined by the set of $n-1$ exchange rates with any given currency. We might, however, turn our attention to unconventional electronic currencies.\footnote{Although the Internal Revenue Service might disagree with this terminology~\cite{irs}, electronic currencies nonetheless conform to the conditions we outlined for them to fit within the scope of the model presented here.} When included in the vertex set, these ``currencies'' may fascinatingly not yield a complete graph. In May 2014, for instance, the People's Bank of China began urging banks to be wary of transactions in Bitcoin, a particularly prominent electronic currency. Indeed, China's largest banks banned ``activities related to Bitcoin trading.''~\cite{china} The edge in the market graph between the yuan and Bitcoin is therefore not present. One might further extend the network, data permitting, to include the elicit goods that make up a significant fraction of transactions involving electronic currencies.~\cite{silk} One might expect that the popularity of resorting to alternate payment methods indicates an aversion to trading via traditional centralized currencies, perhaps signaling additional absent edges.

We have so far considered only static market equilibria. Allowing the discussion to remain in the context of currencies, we will briefly discuss considerations for dynamics on exchange networks. Ellerman notes that the no-arbitrage condition of equation \eqref{add} is formally equivalent to Kirchhoff's junction rule of electrodynamics.~\cite{ellerman} We must remember, however, that the economically-relevant system is multiplicative as in \eqref{mult}. It is therefore unrealistic to expect exchange rate dynamics akin to the charging of a circuit in response to changes in consumer demand or central bank supply. Rather, we might imagine a set of basis entries $(e_{i_1,j_1}^*,e_{i_2,j_2}^*,\ldots,e_{i_{n-1},j_{n-1}}^*)$ perturbed according to
\begin{align}
e_{i_k,j_k}^*\mapsto e_{i_k,j_k}^*+\delta e_{i_k,j_k}^*
\end{align}
along with the observation from Definition \ref{exchange} that
\begin{align}
\delta\tilde{E}_{i,j}=\exp(E_{i,j})\delta E.
\end{align}
Letting $A$ be a third-order tensor mapping a basis to its unique arbitrage-free exchange matrix, linearity gives us
\begin{align}
\delta E=A\left(\begin{matrix}\delta e_{i_1,j_1}^*\\\vdots\\\delta e_{i_{n-1},j_{n-1}}^*\end{matrix}\right)
\end{align}
whence we may proceed to solve the dynamic equations. Though Theorem \ref{big} simplifies the calculation by guaranteeing that it is sufficient to know the changes to just $n-1$ independent entries, one must be careful in determining precisely which entries change directly due to a supply or demand perturbation.

Finally, it may be of particular interest to apply these methods on so-called ``scale-free networks,'' those graphs whose degree distributions have power tails. It is well-established that this ubiquitous property arises naturally in economic systems~\cite{fed} and from simple stochastic network formation mechanisms.~\cite{barabasi} Additionally, Miroswski has found success examining arbitrage-free systems with a cellular automaton model~\cite{mirowski07} which, as he points out, are conducive to graph models~\cite[p.~240]{davis} and have been shown to produce asymptotic power tails.~\cite{palasek} Adapting the formulation given here to be specific for such economically- and socially-realistic networks may be a promising line of future inquiry.

\bibliographystyle{unsrt}

\begin{thebibliography}{99}

\bibitem{harrison}
Harrison, M.~\& Waldron, P.\ (2011). \textit{Mathematics for Economics and Finance.} Routledge: Abingdon, UK.

\bibitem{mavrides}
Mavrides, M.\ (1992). \textit{Triangular Arbitrage in the Foreign Exchange Market: Inefficiencies, Technology, and Investment Opportunities.} Greenwood Publishing Group: Westport, CT.

\bibitem{mwangi}
Mwangi, C.\ I.\ \& Duncan, M.\ O.\ (2012). An investigation into the existence of exchange rate arbitrage in the Mombasa spot market. \textit{International Journal of Humanities and Social Science, 2}(21), 182-196.

\bibitem{ellerman}
Ellerman, D.\ P.\ (1984). Arbitrage Theory: A mathematical introduction. \textit{SIAM Review, 26}(2), 241-261.

\bibitem{mirowski07}
Mirowski, P.\ (2007). Markets come to bits: Evolution, computation and markomata in economic science. \textit{Journal of Economic Behavior \& Organization, 63}, 209-242.

\bibitem{mirowski91}
Mirowski, P.\ (1991). Postmodernism and the social theory of value. \textit{Journal of Post Keynesian Economics, 13}(4), 565-582.

\bibitem{diestel}
Diestel, R.\ (2010). \textit{Graph Theory.} Springer Graduate Texts in Mathematics, vol.\ 173.

\bibitem{floro}
Floro, Z.\ \& Woodruff, M.\ (2012). Proof that bartering can get you anything you want (Houseboat Edition). \textit{Business Insider}.

\bibitem{irs}
\textit{Internal Revenue Service.} (2014). Notice 2014-21.

\bibitem{china}
Titcomb, J.\ (2014). China cracks down on Bitcoin. \textit{The Telegraph.}

\bibitem{silk}
Goldstein, J.\ (2013). Arrest in U.S.\ shuts down a black market for narcotics. \textit{The New York Times.}

\bibitem{fed}
Soram\"{a}ki, K., Bech, M.\ L., Arnold, J., Glass, R.\ J., \& Beyeler, W.\ E.\ (2006). The topology of interbank payment flows. \textit{Federal Reserve Bank of New York Staff Reports,} no.\ 243.

\bibitem{barabasi}
Barab\`{a}si, A.\ \& Albert, R.\ (1999). Emergence of scaling in random networks. \textit{Science, 286} pp.\ 509-512.

\bibitem{davis}
Davis, M.\ D., Sigal, R., \& Weyuker, E.\ J.\ (1994). \textit{Computability, Complexity, and Languages.} Academic Press: San Diego, CA.

\bibitem{palasek}
Palasek, S.\ (2013). Information flow in cellular automata. \textit{Complex Systems, 22}(2), 193-202.

\end{thebibliography}

\end{multicols}

\end{document}